\definecolor{mst1}{rgb}{0.3266,    0.1840,    0.4290}
\definecolor{mst2}{rgb}{0.9665,    0.6340,   0.1526}
\definecolor{mst3}{rgb}{0.7629,    0.1073,    0.4411}
\newcommand{\funK}[2]{\boldsymbol K_{{#1},{#2}}}
\newcommand{\funVecK}[2]{\boldsymbol{k}_{{#1},{#2}}}
\newcommand{\funScalarK}[2]{k_{{#1},{#2}}}
\newcommand{\dataIndex}[2]{{#1}^{#2}}
\newcommand{\getElementOfVec}[2]{{[#1]}^{#2}}
\newcommand{\elementCounterB}{i}
\newcommand{\elementCounter}{i}
\newcommand{\dataCounter}{i}
\newcommand{\bfalpha}{\boldsymbol{\alpha}}
\newcommand{\1}{\boldsymbol{1}}
\newcommand{\0}{\boldsymbol{0}}
\newcommand{\M}{\boldsymbol{M}}
\newcommand{\I}{\boldsymbol{I}}
\newcommand{\W}{\boldsymbol{W}}
\newcommand{\Y}{\boldsymbol{Y}}
\newcommand{\Z}{\boldsymbol{Z}}
\newcommand{\X}{\boldsymbol{X}}
\newcommand{\x}{\boldsymbol{x}}
\newcommand{\y}{\boldsymbol{y}}
\newcommand{\w}{\boldsymbol{w}}
\newcommand{\meanGPmean}{\lambda}
\newcommand{\priorGPmean}{\gamma}
\newcommand{\meanGPcov}{\Lambda}
\newcommand{\priorGPcov}{\Gamma}
\newcommand{\mubf}{\boldsymbol{\mu}}
\newcommand{\Sigmabf}{\boldsymbol{\Sigma}}
\newcommand{\meanGPmeanVec}{\boldsymbol{\lambda}}
\newcommand{\priorGPmeanVec}{\boldsymbol{\gamma}}
\newcommand{\meanGPmeanDiag}{\bar{\boldsymbol{\lambda}}}
\newcommand{\priorGPmeanDiag}{\bar{\boldsymbol{\gamma}}}
\newcommand{\priorGroundTruth}{g} %
\newcommand{\meanGroundTruth}{h} %
\newtheorem{remark}{Remark}
\newtheorem{lemma}{Lemma}
\newtheorem{theorem}{Theorem}
\title{\LARGE \bf
Gaussian Processes with State-Dependent Noise for Stochastic Control
}
\author{Marcel Menner and Karl Berntorp%
\thanks{Marcel Menner and Karl Berntorp are with Mitsubishi Electric Research Laboratories (MERL), 201 Broadway, Cambridge, MA, 02139, USA (e-mail: menner@ieee.org; karl.o.berntorp@ieee.org).}%
}
\begin{document}

\maketitle
\thispagestyle{empty}
\pagestyle{empty}

\begin{abstract}
This paper considers a stochastic control framework, in which the residual model uncertainty of the dynamical system is learned using a Gaussian Process (GP). In the proposed formulation, the residual model uncertainty consists of a nonlinear function and state-dependent noise. The proposed formulation uses a \textit{posterior-GP} to approximate the residual model uncertainty and a \textit{prior-GP} to account for state-dependent noise. The two GPs are interdependent and are thus learned jointly using an iterative algorithm. Theoretical properties of the iterative algorithm are established. Advantages of the proposed state-dependent formulation include (i) faster convergence of the GP estimate to the unknown function as the GP learns which data samples are more trustworthy and (ii) an accurate estimate of state-dependent noise, which can, e.g., be useful for a controller or decision-maker to determine the uncertainty of an action. Simulation studies highlight these two advantages.
\end{abstract}

\section{Introduction}
Recent progress in computational resources, data science, robotics, and control has sparked an interest in considering uncertainty bounds within a controller or decision-maker~\cite{hewing2020learning}. %
Gaussian-process (GP) regression is a popular paradigm for estimating a function, because it provides not only a mean estimate of the desired quantity, but also a confidence bound~\cite{rasmussen2005gaussian}.
Using a probabilistic interpretation, a GP can accurately reflect the distribution of the data for a given and/or constant prior on the measurement noise. 
This paper proposes a GP formulation for learning an unknown function and its state-dependent noise distribution for stochastic control.

\subsection{Motivating Example}

GP regression uses the posterior distribution given data points to fit a function mapping from a value $\x$ to a noisy output $y$. 
Fig.~\ref{fig:motEx} illustrates a comparison between a classical GP and the proposed GP with state-dependent noise.
It shows that the classical solution fails to reduce its uncertainty bounds based on the proximity of the output measurements, $y$.
This behavior becomes apparent by inspecting the mean and covariance equations for a classical GP solution, in which the mean is a function of both $\x$ and the associated measurement $y$, i.e., $\boldsymbol \mu(\x)$ depends on the data $\x^i,y^i$, whereas the covariance $\boldsymbol \Sigma(\x)$ depends only on the input to the unknown function $\x^i$, and not on $y^i$. 
In contrast, the proposed GP formulation accurately reflects the noise in the data for both regions of higher noise and regions of lower noise.
\begin{figure}[t]
    \centering   
    \includegraphics[trim={1.1cm 0 1.2cm 0 } , clip,width=.9\columnwidth]{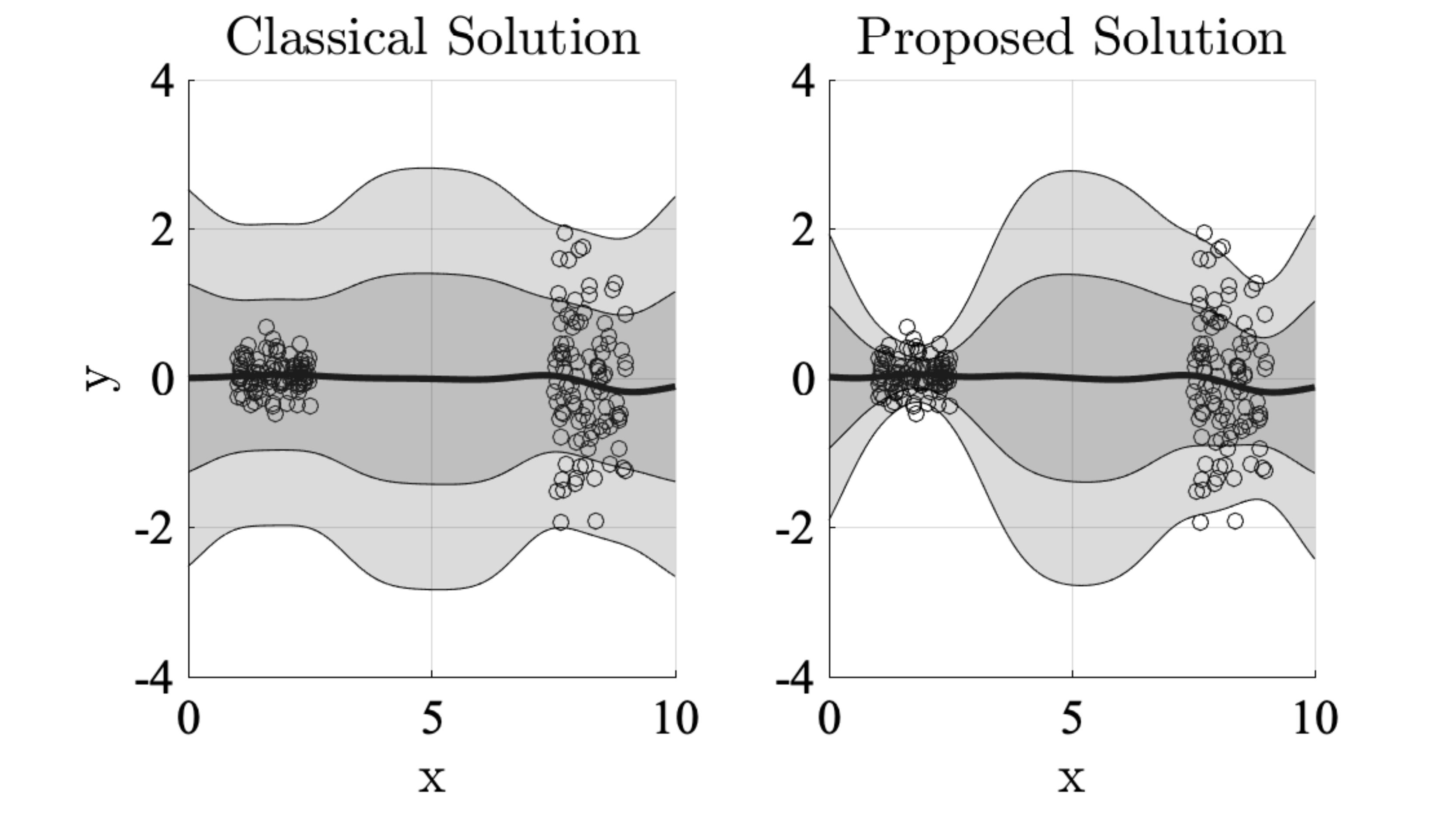}
    \caption{Motivating example for GP regression with state-dependent noise.
    The displayed scenario uses 100 samples between $x=[1,2.5]$ with lower noise and between %
    $x=[7.5,9]$ with higher noise. %
    Left: Classical solution without state-dependence.
    Right: State-dependent solution.
    The example that the GP on the left evenly reduces uncertainty based on the state, $x$.
    The GP on the right reduces uncertainty based on the state, $x$, but also based on the value, $y$.
    Hence, if the measurement values $y$ are ``closer together" (see left part of the two plots), the uncertainty is reduced appropriately.
    }
    \label{fig:motEx}
\end{figure}

\subsection{Contributions and Algorithmic Realization}
This paper considers a stochastic control framework, where the residual model uncertainty is learned using GP regression.
We propose a formulation, where the residual model uncertainty consists of a nonlinear function and state-dependent noise.
The proposed GP formulation learns both the nonlinear function and the noise distribution simultaneously.
It uses one GP to approximate the posterior distribution, which we refer to as the \textit{posterior-GP}.
Different from a classical GP, the proposed GP uses a data-based prior estimate to adjust the prior to match the noise distribution in the training data.
In this paper, this data-based prior estimate is provided by a second GP, which we refer to as \textit{prior-GP}.
The proposed posterior-GP uses the training data directly, and the prior-GP uses the variance of the training data.
However, the variance requires a mean estimate---provided by the posterior-GP.
On the other hand, the posterior-GP requires an estimate of the prior---provided by prior-GP.
We propose an iterative algorithm to address this interdependence and learn both GPs simultaneously. 
We present theoretical properties of the iterative algorithm such as boundedness and guaranteed convergence under some simplifying assumptions.
The main advantages of the proposed solution are (i) less data are needed to accurately estimate the unknown function due to the GP learning which data samples are more trustworthy and (ii) a decision-maker can accurately determine the uncertainty of an action.
Both advantages are illustrated using simulation examples.  

\subsection{Related Works}

\subsubsection*{GPs for stochastic control}
GPs are often used learning unknown parts of system dynamics~\cite{hewing2018cautious, hewing2020cautious,kabzan2019learning ,vaskov2022friction,mosharafian2021gaussian} or cost functions~\cite{levine2011nonlinear, levine2012continuous}.
In~\cite{hewing2018cautious, hewing2020cautious}, a GP is used to learn a residual dynamics model and its uncertainty bound is subsequently used within a chance-constrained MPC.
Further,~\cite{hewing2020cautious} presents approximations for efficient computation. %
In~\cite{kabzan2019learning}, a control approach based on learning GPs online is presented and applied to an autonomous race car. 
The work in~\cite{vaskov2022friction} uses a GP to model tire friction for improving the performance and robustness of a stochastic MPC.
In~\cite{mosharafian2021gaussian}, a GP-based stochastic MPC is used for cooperative adaptive cruise control, in which the uncertainty is used predictive behavioral models. 
In~\cite{levine2011nonlinear,levine2012continuous}, GPs are utilized to learn a cost function that can be used for control. 

\subsubsection*{Constraints for GP regression}
Recent work studies constrained GP regression~\cite{jidling2017linearly, maatouk2017gaussian,matschek2020constrained, berntorp2022online, swiler2020survey, da2012gaussian, da2020gaussian}, which can, e.g., be used to enforce physics-based information.
E.g., the approach in~\cite{jidling2017linearly} uses a linear operators to enforce equality constraints,~\cite{maatouk2017gaussian} uses virtual measurements to enforce inequality constraints,~\cite{matschek2020constrained} uses tractability constraint to learn a reference trajectory from data for model predictive control, and~\cite{berntorp2022online} uses a basis-function approach to enforce constraints on the GP model.

\subsubsection*{GPs with input-dependent noise} 
Input-dependent noise is considered in~\cite{goldberg1997regression, kersting2007most, boukouvalas2014optimal, miyagusuku2015gaussian, zhang2020improved}, e.g., for robotic perception and localization. 
In~\cite{goldberg1997regression}, the posterior distribution of the noise distribution is sampled using Markov chain Monte Carlo methods.
In contrast, we use an algorithm that does not rely on sampling and iteratively refines both the posterior-GP and the prior-GP.
The work in~\cite{kersting2007most} also considers an iterative estimation algorithm, which is not guaranteed to converge and may lead to oscillations. 
Compared to~\cite{kersting2007most}, we state theoretical properties of the proposed iterative algorithm such as boundedness and convergence under some mild assumptions.
Further,~\cite{boukouvalas2014optimal} considers input-dependent noise for optimal design, and~\cite{miyagusuku2015gaussian} presents a robot localization task using GPs with input-dependent noise.
Different from~\cite{goldberg1997regression, kersting2007most, boukouvalas2014optimal, miyagusuku2015gaussian, zhang2020improved}, we present GPs with state-dependent noise with application to stochastic control and present theoretical properties of the estimation algorithm.

\subsection{Preliminaries and Notation}
We use $\getElementOfVec{\x}{\elementCounter}$ to indicate the $\elementCounter$-th element of a vector $\x$ and $\getElementOfVec{\X}{\elementCounter\elementCounter}$ to indicate the element in the $\elementCounter$-th row and $\elementCounter$-th column of a matrix $\X$.
We use the Hadamard product (element-wise product) $\circ$, i.e., for $\x, \w \in \mathbf{R}^{n_x}$, the $\elementCounter$-th element of $\boldsymbol z = \x \circ \w \in \mathbf{R}^{n_x}$ is 
$\getElementOfVec{\boldsymbol z}{\elementCounter} = \getElementOfVec{\x}{\elementCounter} \getElementOfVec{\w}{\elementCounter}$.
To ease exposition, we define $\x^{\circ 2}:= \x \circ \x$.
We use $\I$ for the identity matrix and $\1$ and $\0$ for the one and zero vector of appropriate dimension. For a vector  $\x$, $\x \sim \mathcal{N}(\mubf,\Sigmabf)$  
indicates that  $\x\in \mathbf{R}^{n_x}$ is Gaussian distributed with mean $\boldsymbol \mu$ and covariance $\boldsymbol\Sigma$.

Throughout, we use a common choice of kernel $k(\x,\w)$ with the radial basis function,
\begin{align}
\label{eq:kernel}
    k(\x,\w)
    =
    \exp{
    \left(
    - l (\x-\w)^T(\x-\w)
    \right)
    },
\end{align}
where $l$ is the length scale defining the spread of the basis function.
Note that $k(\x,\w) = 1$ if $\x=\w$ and $k(\x,\w) \rightarrow 0$ for $\x$ far away from $\w$.
For brevity, let $\funScalarK{\x}{\w}:=k(\x,\w)$.
Next, we define
\begin{align*}
    \funVecK{\x}{\W}
    =
    \begin{bmatrix}
    \funScalarK{\x}{\dataIndex{\w}{1}}  
    &
    \funScalarK{\x}{\dataIndex{\w}{2}}
    &
    \hdots
    &
    \funScalarK{\x}{\dataIndex{\w}{M}}
    \end{bmatrix}
\end{align*}
and
\begin{align*}
    \funK{\X}{\W}
    =
    \begin{bmatrix}
    \funScalarK{\dataIndex{\x}{1}}{\dataIndex{\w}{1}}  
    & 
    \funScalarK{\dataIndex{\x}{1}}{\dataIndex{\w}{2}}  
    & \hdots & 
    \funScalarK{\dataIndex{\x}{1}}{\dataIndex{\w}{M}}  
    \\
    \funScalarK{\dataIndex{\x}{2}}{\dataIndex{\w}{1}}  
    & 
    \funScalarK{\dataIndex{\x}{2}}{\dataIndex{\w}{2}}  
    & \hdots & 
    \funScalarK{\dataIndex{\x}{2}}{\dataIndex{\w}{M}}  
    \\
    & \vdots
    \\
    \funScalarK{\dataIndex{\x}{N}}{\dataIndex{\w}{1}}  
    & 
    \funScalarK{\dataIndex{\x}{N}}{\dataIndex{\w}{2}}  
    & \hdots & 
    \funScalarK{\dataIndex{\x}{N}}{\dataIndex{\w}{M}}  
    \end{bmatrix}
\end{align*}
with the collection of vectors
\begin{align*}
    \X
    &=
    \begin{bmatrix}
    \dataIndex{\x}{1} & \dataIndex{\x}{2} \hdots & \dataIndex{\x}{N} 
    \end{bmatrix} \in \mathbf{R}^{n_x\times N}
    \\
    \W
    &=
    \begin{bmatrix}
    \dataIndex{\w}{1} & \dataIndex{\w}{2} \hdots & \dataIndex{\w}{M} 
    \end{bmatrix} \in \mathbf{R}^{n_x\times M}.
\end{align*} 

For a function $\meanGPmean(\x)$ with $\meanGPmean:\mathbf{R}^{n_x}\rightarrow \mathbf{R}$, we use bold symbol $\meanGPmeanVec(\X)$ with $\meanGPmeanVec : \mathbf{R}^{n_x\times N}\rightarrow \mathbf{R}^N$ as shorthand notation to indicate that the $\elementCounter$-th element of 
$\getElementOfVec{\meanGPmeanVec(\X)}{\elementCounter}
= \meanGPmean(\dataIndex{\x}{\elementCounter})$. Similarly, we use the diagonal matrix $\meanGPmeanDiag(\X)$ with $\meanGPmeanDiag:\mathbf{R}^{n_x\times N}\rightarrow \mathbf{R}^{N\times N}$, whose diagonal elements $\getElementOfVec{\meanGPmeanDiag(\X)}{\elementCounter\elementCounter}=\meanGPmean(\dataIndex{\x}{\elementCounter})$. Hence, $\meanGPmeanDiag(\X)={\rm diag}(\meanGPmeanVec(\X))$.

\section{Problem Statement} 
\subsection{Stochastic Control Formulation}

We consider an uncertain dynamical system with the measurable state $\x_k\in\mathbf{R}^{n_x}$ in discrete time,
\begin{align}
\label{eq:dyn_sys}
    \x_{k+1} = \boldsymbol f(\x_k,\boldsymbol u_k) + \boldsymbol B \left(\boldsymbol h(\x_k,\boldsymbol u_k) + \boldsymbol \epsilon_k\right),
\end{align}
where $\boldsymbol u_k\in\mathbf{R}^{n_u}$ is the controllable input. 
The dynamical system model in~\eqref{eq:dyn_sys} is composed of a known nominal function $\boldsymbol f$, the known matrix $\boldsymbol{B}$ with full-column rank, and an initially unknown function $\boldsymbol h:\mathbf{R}^{n_x}\times \mathbf{R}^{n_u}\rightarrow\mathbf{R}^{n_h}$, which is to be learned using data.
Finally, $\boldsymbol \epsilon_k\in\mathbf{R}^{n_\epsilon}$ denotes noise or uncertainty arising from imperfect sensors, perception, model uncertainty, or the like.
We assume that the elements in $\boldsymbol \epsilon_k$ are uncorrelated, i.e., $\boldsymbol \epsilon_k$ is a realization of a distribution with a diagonal covariance matrix, however, the diagonal entries are allowed to be state-dependent,
\begin{align}
\label{eq:disturbance_definition}
    \boldsymbol \epsilon_k \sim \mathcal{N}\left(\0,
    \boldsymbol\Sigma(\x_k,\boldsymbol u_k)\right)
\end{align}
with the diagonal matrix $\boldsymbol\Sigma(\x_k,\boldsymbol u_k)$. 

The dynamical system is subject to state and input constraints, which are enforced as chance constraints,
\begin{subequations}
\label{eq:chance_constraints_problem}
    \begin{align}
        &{\rm Pr}(\x_k\in \mathrm{X})\geq p_x
        \\
        &{\rm Pr}(\boldsymbol u_k\in \mathrm{U})\geq p_u,
    \end{align}
\end{subequations}
where $\mathrm{X}$ and $\mathrm{U}$ are the state and input constraint sets that are to be satisfied with $p_x$ and $p_u$ probability, respectively.

The goal is thus to control the dynamical system in~\eqref{eq:dyn_sys} in the presence of the noise/disturbances in~\eqref{eq:disturbance_definition} such that the chance constraints in~\eqref{eq:chance_constraints_problem} are not violated. 
In the following, we drop the dependence of $\boldsymbol h$ and $\boldsymbol\Sigma$ on $\boldsymbol u_k$ to ease exposition.
In order to learn the unknown function $\boldsymbol h$ and unknown covariance matrix $\boldsymbol\Sigma$, we use~\eqref{eq:dyn_sys} to form
\begin{align*}
    \y = 
    \boldsymbol B^{+}\left(
    \x_{k+1} - \boldsymbol f(\x_k,\boldsymbol u_k)
    \right)
    =
    \boldsymbol h(\x_k) + \boldsymbol\epsilon
\end{align*}
with the Moore-Penrose pseudo inverse $\boldsymbol B^{+}$.

\subsection{Gaussian-Process Regression}

As the covariance in~\eqref{eq:disturbance_definition} is diagonal, we present the estimation problem for the one-dimensional case $n_h=1$ for simplicity.
However, $n_h\geq 2$ follows immediately by using $n_h$ GPs, i.e., one GP for each of the $n_h$ dimensions. 
Hence, we use 
\begin{subequations}
\begin{align}
\label{eq:unknwon_func}
    y^i =  \meanGroundTruth(\x^i) + \epsilon^i,
\end{align}
where $i$ indicates the $i^{\rm th}$ data point and
\begin{align}
\label{eq:unknwon_noise}
    \epsilon^i \sim \mathcal{N}\left(0, \priorGroundTruth (\x^i) \right),
\end{align}
\end{subequations}
with state-dependent variance $\priorGroundTruth (\x^i)>0$ of the noise $\epsilon^i$.

This paper addresses  how to simultaneously learn a surrogate function of $h$ with state-dependent variance $g$ using GPs with the training data
\begin{subequations}
\label{eq:training_data}
\begin{align}
    \mathcal{D}_\meanGPmean 
    & = \{ \dataIndex{\x}{\dataCounter},\ \dataIndex{y}{\dataCounter}\}_{\dataCounter=1}^N
    \\
    \X
    &=
    \begin{bmatrix}
    \dataIndex{\x}{1} & \dataIndex{\x}{2} \hdots & \dataIndex{\x}{N} 
    \end{bmatrix} \in \mathbf{R}^{n_x\times N}
    \\
    \label{eq:data_Y}
    \Y 
    &= \begin{bmatrix}
        \dataIndex{y}{1} & \dataIndex{y}{2} & \hdots & \dataIndex{y}{N}
    \end{bmatrix}^T \in \mathbf{R}^{N},
\end{align}
\end{subequations}
where $N$ is the number of data samples.

\section{Mathematical Formulation of Gaussian Processes with State-Dependent Noise}
This section presents the proposed learning algorithm, in which one GP is used to approximate the unknown function $h$ in~\eqref{eq:unknwon_func}, which we refer to as \textit{posterior-GP}.
In order to reflect state-dependent noise, the posterior-GP uses a prior estimate of the noise distribution.
In this paper, this prior estimate is provided by a second GP, which adjusts the prior in order to reflect the noise distribution and is thus referred to as \textit{prior-GP}.
Both the posterior-GP and the prior-GP are determined using the data in~\eqref{eq:training_data}, where the posterior-GP uses the data in~\eqref{eq:data_Y} directly, and the prior-GP uses the variance of the data in~\eqref{eq:data_Y}.  
However, computing the variance requires a mean estimate---provided by the posterior-GP. In turn, the mean estimate depends on the prior---provided by the prior-GP. Hence, the posterior-GP and the prior-GP are interdependent, which we address next. 

\subsection{Posterior-GP}
In order to derive the posterior distribution of the unknown function in~\eqref{eq:unknwon_func} given the measurement data, we use
\begin{align*}
    \begin{bmatrix}
    \Y  
    \\ 
    y
    \end{bmatrix}
    \sim
    \mathcal{N}
    \left(
    \0,
    \begin{bmatrix}
    \funK{\X}{\X} \!+\! \sigma_0^2 \I \!+\! \priorGPmeanDiag(\X)
    & 
    \funVecK{\X}{\x}
    \\ 
    \funVecK{\x}{\X} &\funScalarK{\x}{\x} \!+\! \sigma_0^2 \!+\! \priorGPmean(\x)
    \end{bmatrix}
    \right)
\end{align*} 
where $\priorGPmean(\x)$ is an initially unknown function, which assigns a prior to the measurement data.
Then, the conditional distribution $p(y|\Y,\X,\x)$ is a normal distribution with
\begin{align*}
y\sim \mathcal{N}\left( \meanGPmean(\x),  \meanGPcov(\x)\right),   
\end{align*}
where the mean and the posterior variance are given by
\begin{subequations}
\label{eq:meanGP}
\begin{align}
\label{eq:meanGP_mean}
    \meanGPmean(\x) &= \funVecK{\x}{\X}\left( \funK{\X}{\X} + \sigma^2_0 \I + \priorGPmeanDiag(\X) \right)^{-1} \Y 
    \\
    \meanGPcov(\x) &= 
    \funScalarK{\x}{\x} \!+\! \sigma_0^2 \!+\! { \priorGPmean}(\x)
    \nonumber
    \\
    &\quad
    - \funVecK{\x}{\X}\left( \funK{\X}{\X} \!+\! \sigma_0^2 \I \!+\! \priorGPmeanDiag(\X) \right)^{-1} \funVecK{\X}{\x}.\label{eq:meanGP_cov}
\end{align}
\end{subequations}

\begin{remark}[Connection to classical GPs]
In classical GP regression, the prior is assumed to be known and $\boldsymbol{\bar \priorGPmean}(\boldsymbol x)\!=\!\mathbf{0}$. We will use this as baseline comparison in Section~\ref{sec:results}.  
\end{remark}

\subsection{Prior-GP}
To infer an accurate uncertainty quantification or noise estimate from data, this paper proposes to utilize a prior-GP.
The prior-GP is trained using the measurements, $\dataIndex{y}{\dataCounter}$, and the mean estimate of the posterior-GP, $\meanGPmean(\x)$ in~\eqref{eq:meanGP_mean}.
Hence, the prior-GP uses training data computed as the second moment of the measurements, $\dataIndex{y}{\dataCounter}$, i.e.,
\begin{subequations}
\label{eq:data_train_var}
\begin{align}
    \mathcal{D}_\priorGPmean = \{\dataIndex{\x}{\dataCounter},\ \dataIndex{z}{\dataCounter}\}_{\dataCounter=1}^N,
\end{align}
with
\begin{align} 
    \dataIndex{z}{\dataCounter}
    &=
    \left(\dataIndex{y}{\dataCounter} - \meanGPmean(\dataIndex{\x}{\dataCounter})\right)^2 
    -
    \sigma_0^2,
    \\
    \Z 
    &= \begin{bmatrix}
        \dataIndex{z}{1} & \dataIndex{z}{2} & \hdots & \dataIndex{z}{N}
    \end{bmatrix}^T  \in \mathbf{R}^{N}.
\end{align}
\end{subequations}
Then, the prior-GP uses the data in~\eqref{eq:data_train_var} and 
\begin{align*}
    \begin{bmatrix}
    \Z %
    \\ 
    z
    \end{bmatrix}
    \sim
    \mathcal{N}
    \left(
    \boldsymbol 0,
    \begin{bmatrix}
    \funK{\X}{\X} + \sigma_0^2 \boldsymbol I 
    & 
    \funVecK{\X}{\x}
    \\ 
    \funVecK{\x}{\X} & \funScalarK{\x}{\x} + \sigma_0^2 
    \end{bmatrix}
    \right).
\end{align*}
Using the conditional distribution $p(z|\Z,\X,\x)$, the prior-GP is given by
\begin{align*}
z \sim \mathcal{N}\left( \priorGPmean(\x),  \priorGPcov(\x)\right),   
\end{align*}
with
\begin{subequations}
\label{eq:prior_gp}
\begin{align}
\label{eq:prior_gp_mean}
    \priorGPmean(\x) &= \funVecK{\x}{\X} \left( \funK{\X}{\X} + \sigma^2_0 \boldsymbol I \right)^{-1} \Z,
    \\ 
    \priorGPcov(\x) &= \funScalarK{\x}{\x} \!+\! \sigma^2_0 - \funVecK{\x}{\X}\left( \funK{\X}{\X} \!+\! \sigma^2_0 \boldsymbol I \right)^{-1} \funVecK{\X}{\x}.\label{eq:prior_gp_cov}
\end{align}
\end{subequations}
 
On the one hand, the training data for the prior-GP in~\eqref{eq:data_train_var} require a mean estimate of the unknown function $h$.
On the other hand, the posterior-GP requires a prior estimate of the noise distribution.
Hence, the posterior-GP in~\eqref{eq:meanGP} and the prior-GP in~\eqref{eq:prior_gp} need to be estimated jointly, which we address using an iterative algorithm in Section~\ref{sec:iterative_algorithm}.

\begin{remark}[Exploitation-exploration]
    GP models are often used as a surrogate function for black-box optimization, e.g., in Bayesian optimization. 
    For this application, $\meanGPmean(\x)$ in~\eqref{eq:meanGP_mean} quantifies the mean estimate and $\priorGPcov(\x)$ in~\eqref{eq:prior_gp_cov} quantifies the uncertainty, which can be used for exploration, whereas $\meanGPcov(\x)$ in~\eqref{eq:meanGP_cov} quantifies the noise in the data.
\end{remark}

\begin{remark}[Parametric model]
An alternative to using a prior-GP is to use a parametric model, e.g., using basis functions. In this case,
\begin{align*}
    \priorGPmean(\boldsymbol x)
    =
    \boldsymbol \phi(\boldsymbol x) \boldsymbol \theta,
\end{align*}
where $\boldsymbol \phi : \mathbf{R}^{n_x}\rightarrow \mathbf{R}^{n_\phi}$ denotes the vector of $n_\phi$ basis functions and $\boldsymbol \theta$ is the vector of parameters to be learned.
The parameter vector, $\boldsymbol \theta$, can be learned using regression,
\begin{align*}
    \boldsymbol \theta &= \arg\min_{\boldsymbol {\tilde \theta}} \sum_{\dataCounter=1}^N \left(\dataIndex{z}{\dataCounter} - 
    \boldsymbol \phi\left(\dataIndex{\x}{\dataCounter} \right)\boldsymbol {\tilde \theta}\right)^2.
\end{align*}
Note that there is a connection between GPs and basis-function expansions through a Hilbert-space interpretation~\cite{Solin2020}.
\end{remark}

\subsection{Iterative algorithm}
\label{sec:iterative_algorithm}
We jointly estimate the unknown function in~\eqref{eq:unknwon_func} and the noise distribution of its data samples in~\eqref{eq:unknwon_noise} by iteratively refining the posterior estimate and the prior estimate.
Algorithm~\ref{alg:alg} summarizes the procedure.
At iteration~$j$, on Line~\ref{alg:update_mean}, the iterative algorithm first updates the posterior-GP using the mean estimate of the prior-GP at the previous iteration with
\begin{subequations}
\label{eq:iterations}
\begin{align}
\label{eq:iterations_mean}
    \meanGPmeanVec_{j+1}(\X) &= \funK{\X}{\X} \left(\funK{X}{X}+\sigma_0^2 \I + \priorGPmeanDiag_{j}(\X)\right)^{-1} \Y,
\end{align}
Then, on Line~\ref{alg:update_prior_data} it uses the updated posterior-GP estimate in~\eqref{eq:iterations_mean} to update the training data for the prior-GP,
\begin{align}  
\label{eq:iterations_data}
    \Z_{j+1} & = \left(\left(\Y\!-\!\meanGPmeanVec_{j+1}(\X)\right)^{\circ 2} \!-\! \sigma_0^2 \mathbf{1}\right),
\end{align}
Next, it uses the training data in~\eqref{eq:iterations_data} to update the prior GP,
\begin{align} 
    \priorGPmeanVec_{j+1}(\X) &= \funK{\X}{\X} \left(\funK{\X}{\X} \!+\!\sigma_0^2 \I \right)^{-1} 
    \Z_{j+1},
\end{align}
see Line~\ref{alg:update_prior} in Algorithm~\ref{alg:alg}.
\end{subequations}
Finally, Algorithm~\ref{alg:alg} is stopped if a stopping criterion is met. 
We use a stopping criterion based on the basis-function interpretation of GPs, i.e.,
\begin{align*}
    \bfalpha_{j+1} = (\funK{\X}{\X} + \sigma_0^2 \I + \priorGPmeanDiag_{j+1}(\X))^{-1}\Y,
\end{align*}
where $\bfalpha$ can be thought of as weights for the basis functions, i.e., $\meanGPmean(\x)=\funVecK{\x}{\X}\bfalpha$.
This choice is motivated by the impact that the prior-GP has on the posterior-GP.
In the following, we show that this iterative procedure converges to its optimum under some assumptions.

\begin{algorithm}
\label{alg:alg}
    \SetKwInOut{Input}{Input}
    \SetKwInOut{Output}{Output}
     \SetKwRepeat{Do}{do}{while}
   Initialize $\alpha_0=\mathbf{0}$, $\Z_0=\mathbf{0}$, $\priorGPmeanVec_{0}(\X) = \0$, $j=0$ \;   \label{alg:init}
   \Do{$\|\bfalpha_{j+1} - \bfalpha_{j}\|_2\leq \delta $}{
   \label{alg:iter}
   {\color{gray} \texttt{\%\% update posterior-GP estimate}\; }
   $\meanGPmeanVec_{j+1}(\X) \!=\! \funK{\X}{\X} \left(\funK{\X}{\X}
   \!+\! \sigma_0^2 \I \!+\! \priorGPmeanDiag_{j}(\X)\right)^{-1}\hspace{-0.05cm} \Y$\; 
        \label{alg:update_mean}
   {\color{gray} \texttt{\%\% update training data for prior-GP using posterior-GP}\; }
   $\Z_{j+1} = \left(\left(\Y\!-\!\meanGPmeanVec_{j+1}(\X)\right)^{\circ 2} \!-\! \sigma_0^2 \1 \right)$ \;
        \label{alg:update_prior_data}
   {\color{gray} \texttt{\%\% update prior-GP}\; }
   $\priorGPmeanVec_{j+1}(\X) = \funK{\X}{\X} \left(\funK{\X}{\X} \!+\!\sigma_0^2 \I \right)^{-1} 
    \Z_{j+1}$ \; 
        \label{alg:update_prior}
    {\color{gray} \texttt{\%\% check stopping criterion} \;}
    $\bfalpha_{j+1} = (\funK{\X}{\X} + \sigma_0^2 \I + \priorGPmeanDiag_{j+1}(\X))^{-1} \Y$ \;
    $j \leftarrow j+1 $ \;
        \label{alg:stopCrit}
   }  
    \caption{Iterative algorithm for GP regression with state-dependent noise}
\end{algorithm}

\section{Theoretical Properties of Algorithm}   
In order to analyze the convergence characteristics of the algorithm, we make a simplifying assumption, i.e., we study the algorithm in the presence of a small length scale $l$ in~\eqref{eq:kernel}.
Using a small length scale $l\rightarrow 0$, we can simplify $\funK{\X}{\X}\rightarrow \I$, which allows for formulating analytical expressions of the matrix inverses in~\eqref{eq:iterations}. 
Note that the assumption of a small length scale is only needed to establish a series of (nontight) bounds for the optimizer in Algorithm~\ref{alg:alg}.
Moreover, the assumption is utilized to show how to choose~$\sigma_0^2$ appropriately.
Therefore, in practice, this assumption is not needed.

The main steps of the proof are outlined in Lemma~\ref{lma:joint_convergence}, Lemma~\ref{lma:bounded}, and Theorem~\ref{thm:contracting}.
In particular, 
Lemma~\ref{lma:joint_convergence} shows that if the prior-GP converges, then the posterior-GP converges as well;
Lemma~\ref{lma:bounded} shows that the estimated prior-GP remains bounded for all iterations;
and Theorem~\ref{thm:contracting} proves that the prior GP is contracting, i.e., the prior GP converges to its optimum.

\begin{lemma}
\label{lma:joint_convergence}
If prior-GP converges, then the posterior-GP converges as well, i.e., if $\priorGPmeanVec_j(\X)\!\rightarrow \!\priorGPmeanVec_\star(\X)$, then $\meanGPmeanVec_j(\X)\!\rightarrow\! \meanGPmeanVec_\star(\X)$.
\end{lemma}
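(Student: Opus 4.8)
The plan is to exhibit $\meanGPmeanVec_{j+1}(\X)$ as a continuous function of $\priorGPmeanDiag_j(\X)$ and then pass to the limit. Observe from~\eqref{eq:iterations_mean} that the posterior-GP update depends on the current iterate \emph{only} through the diagonal matrix $\priorGPmeanDiag_j(\X)$; the data $\X$ and $\Y$ and the fixed quantities $\funK{\X}{\X}$ and $\sigma_0^2\I$ do not change across iterations. Define the map
\begin{align*}
    \Phi(\boldsymbol D) = \funK{\X}{\X}\left(\funK{\X}{\X} + \sigma_0^2\I + \boldsymbol D\right)^{-1}\Y,
\end{align*}
defined on the set of diagonal matrices $\boldsymbol D$ for which $\funK{\X}{\X} + \sigma_0^2\I + \boldsymbol D$ is invertible. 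So $\meanGPmeanVec_{j+1}(\X) = \Phi(\priorGPmeanDiag_j(\X))$.

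First I would establish that $\Phi$ is well-defined and continuous in a neighborhood of the limit point $\priorGPmeanDiag_\star(\X) = {\rm diag}(\priorGPmeanVec_\star(\X))$. The matrix $\funK{\X}{\X}$ is a positive-semidefinite RBF kernel Gram matrix, so $\funK{\X}{\X} + \sigma_0^2\I \succeq \sigma_0^2\I \succ 0$; hence $\funK{\X}{\X} + \sigma_0^2\I + \priorGPmeanDiag_\star(\X)$ is invertible as long as $\priorGPmeanDiag_\star(\X)$ does not drive an eigenvalue to zero — and one expects the prior-GP values to stay in a bounded regime (this is exactly what Lemma~\ref{lma:bounded} delivers, which I am entitled to invoke), so invertibility holds along the tail of the sequence and at the limit. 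Since matrix inversion is continuous on the open set of invertible matrices and the map $\priorGPmeanVec \mapsto {\rm diag}(\priorGPmeanVec)$ is linear (hence continuous), $\Phi$ composed with ${\rm diag}(\cdot)$ is continuous there.

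The conclusion is then immediate: if $\priorGPmeanVec_j(\X) \to \priorGPmeanVec_\star(\X)$, then $\priorGPmeanDiag_j(\X) \to \priorGPmeanDiag_\star(\X)$ entrywise, and by continuity of $\Phi$,
\begin{align*}
    \meanGPmeanVec_{j+1}(\X) = \Phi(\priorGPmeanDiag_j(\X)) \longrightarrow \Phi(\priorGPmeanDiag_\star(\X)) =: \meanGPmeanVec_\star(\X),
\end{align*}
so the posterior-GP iterates converge (to the value one would obtain by plugging the converged prior directly into~\eqref{eq:meanGP_mean} evaluated at the training points). I would also remark that the same continuity argument shows the induced quantities $\Z_{j+1}$ in~\eqref{eq:iterations_data} converge, since $\Z_{j+1}$ is a fixed continuous (polynomial) function of $\meanGPmeanVec_{j+1}(\X)$ and $\Y$.

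\textbf{Anticipated obstacle.} The only subtlety is guaranteeing that $\funK{\X}{\X} + \sigma_0^2\I + \priorGPmeanDiag_j(\X)$ stays uniformly invertible — i.e., that the added diagonal perturbation never destroys invertibility along the sequence. Because the $z^i$ in~\eqref{eq:data_train_var} are second moments minus $\sigma_0^2$, they (and hence the prior-GP mean values) can in principle be negative, so $\priorGPmeanDiag_j(\X)$ is not automatically positive semidefinite, and a sufficiently negative entry could in principle cancel the $\sigma_0^2\I$ cushion. This is precisely why the statement is phrased conditionally ("if the prior-GP converges") and why Lemma~\ref{lma:bounded} is proved separately; for the present lemma I would simply note that convergence of $\priorGPmeanVec_j(\X)$ forces the sequence into a bounded set, and combine this with the bound from Lemma~\ref{lma:bounded} (together with an appropriate choice of $\sigma_0^2$, as discussed in the small-length-scale analysis) to keep the relevant matrices a bounded distance away from singularity. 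Everything else is routine continuity.
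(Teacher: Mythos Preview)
Your proposal is correct and follows the same idea as the paper: the paper's proof is a one-liner noting that the claim ``follows immediately from the iterations~\eqref{eq:iterations}'' with $\meanGPmeanVec_{\star}(\X) = \funK{\X}{\X}\left(\funK{\X}{\X} + \sigma_0^2\I + \priorGPmeanDiag_{\star}(\X)\right)^{-1}\Y$. Your continuity argument simply spells out what the paper takes for granted, and your discussion of invertibility (which the paper does not address at all here) is more careful than the original.
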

\begin{proof}
    This follows immediately from the iterations~\eqref{eq:iterations} with $\meanGPmeanVec_{\star}(\X) = \funK{\X}{\X} \left(\funK{\X}{\X} +\sigma_0^2 \I + \priorGPmeanDiag_{\star}(\X)\right)^{-1} \Y$. 
\end{proof}
\begin{lemma}
\label{lma:bounded}
Let the prior-GP be initialized as in Line~\ref{alg:init} in Algorithm~\ref{alg:alg}.
For a small length scale $l\rightarrow 0$, the prior estimate  $\priorGPmeanVec_j(\X)$ is bounded below and above with
\begin{align*}
    - \frac{\sigma_0^2}{1+\sigma_0^2}\1 & 
    \leq 
    \priorGPmeanVec_{j}(\X) 
    \leq 
    \frac{1}{1 + \sigma_0^2} (  \Y^{\circ 2} - \sigma_0^2 \1 )
\end{align*}
 for all iterations~$j$.
\end{lemma}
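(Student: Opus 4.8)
The plan is to prove the two inequalities by induction on the iteration index $j$, using the simplification $\funK{\X}{\X}\rightarrow\I$ so that every matrix inverse becomes the scalar factor $\frac{1}{1+\sigma_0^2}\I$. Under this simplification, the posterior mean update~\eqref{eq:iterations_mean} reads $\meanGPmeanVec_{j+1}(\X) = (\I + \sigma_0^2\I + \priorGPmeanDiag_j(\X))^{-1}\Y$ componentwise, and the prior-GP update becomes $\priorGPmeanVec_{j+1}(\X) = \frac{1}{1+\sigma_0^2}\Z_{j+1} = \frac{1}{1+\sigma_0^2}\bigl((\Y-\meanGPmeanVec_{j+1}(\X))^{\circ 2} - \sigma_0^2\1\bigr)$. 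Since everything decouples across data points, it suffices to establish the scalar bounds $-\frac{\sigma_0^2}{1+\sigma_0^2} \leq \getElementOfVec{\priorGPmeanVec_j(\X)}{i} \leq \frac{1}{1+\sigma_0^2}((\getElementOfVec{\Y}{i})^2 - \sigma_0^2)$ for each $i$ and all $j$.

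First I would handle the base case: Line~\ref{alg:init} sets $\priorGPmeanVec_0(\X)=\0$, and one checks $-\frac{\sigma_0^2}{1+\sigma_0^2}\leq 0$ trivially, while $0 \leq \frac{1}{1+\sigma_0^2}((\getElementOfVec{\Y}{i})^2 - \sigma_0^2)$ needs $(\getElementOfVec{\Y}{i})^2 \geq \sigma_0^2$ — so the upper bound at $j=0$ requires an assumption on $\sigma_0^2$ relative to the data (or the statement is understood to hold once the first genuine update has been applied; I would flag this and, if needed, note that for $j\geq 1$ the bound is self-improving regardless). For the inductive step, assume the bounds hold at iteration $j$. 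The lower bound is immediate: $\priorGPmeanVec_{j+1}(\X) = \frac{1}{1+\sigma_0^2}\bigl((\Y-\meanGPmeanVec_{j+1}(\X))^{\circ 2} - \sigma_0^2\1\bigr) \geq \frac{1}{1+\sigma_0^2}(\0 - \sigma_0^2\1) = -\frac{\sigma_0^2}{1+\sigma_0^2}\1$, since a squared quantity is nonnegative. This direction does not even use the inductive hypothesis.

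The upper bound is the crux. I need to bound $(\getElementOfVec{\Y}{i} - \getElementOfVec{\meanGPmeanVec_{j+1}(\X)}{i})^2$ from above by $(\getElementOfVec{\Y}{i})^2$. Writing $a = \getElementOfVec{\Y}{i}$ and $d_j = 1 + \sigma_0^2 + \getElementOfVec{\priorGPmeanDiag_j(\X)}{ii} > 0$ (positivity of $d_j$ follows from the inductive lower bound, since $\getElementOfVec{\priorGPmeanVec_j(\X)}{i} \geq -\frac{\sigma_0^2}{1+\sigma_0^2} > -(1+\sigma_0^2)$), we have $\getElementOfVec{\meanGPmeanVec_{j+1}(\X)}{i} = a/d_j$, so the residual is $a(1 - 1/d_j) = a\,\frac{d_j-1}{d_j}$ and its square is $a^2\bigl(\frac{d_j-1}{d_j}\bigr)^2$. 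Thus I need $\bigl(\frac{d_j-1}{d_j}\bigr)^2 \leq 1$, i.e. $|d_j - 1| \leq |d_j|$, i.e. $|d_j-1|\leq d_j$, which holds iff $d_j \geq 1/2$. So the upper bound reduces to showing $d_j = 1+\sigma_0^2+\getElementOfVec{\priorGPmeanVec_j(\X)}{i} \geq \tfrac12$, equivalently $\getElementOfVec{\priorGPmeanVec_j(\X)}{i} \geq \tfrac12 - 1 - \sigma_0^2 = -\tfrac12 - \sigma_0^2$ — which is comfortably implied by the inductive lower bound $\getElementOfVec{\priorGPmeanVec_j(\X)}{i}\geq -\frac{\sigma_0^2}{1+\sigma_0^2} > -1 > -\tfrac12-\sigma_0^2$. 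Hence $(\getElementOfVec{\Y}{i}-\getElementOfVec{\meanGPmeanVec_{j+1}(\X)}{i})^2 \leq a^2 = (\getElementOfVec{\Y}{i})^2$, and therefore $\getElementOfVec{\priorGPmeanVec_{j+1}(\X)}{i} \leq \frac{1}{1+\sigma_0^2}((\getElementOfVec{\Y}{i})^2 - \sigma_0^2)$, closing the induction.

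The main obstacle I anticipate is not any single calculation but keeping the logical dependencies straight: the upper bound at step $j+1$ relies on the \emph{lower} bound at step $j$ (through positivity/size of $d_j$), not on the upper bound at step $j$, so the induction must be set up to carry both bounds simultaneously. A secondary subtlety is the base case of the upper bound, which genuinely needs $\sigma_0^2$ chosen small enough that $\sigma_0^2 \leq \min_i (\getElementOfVec{\Y}{i})^2$ (or an equivalent condition) — this is presumably where the paper's later remark about "how to choose $\sigma_0^2$ appropriately" enters, and I would cross-reference it rather than re-derive it. I would also state explicitly at the outset that the $l\to 0$ simplification makes the updates act diagonally, so that all vector inequalities are understood componentwise and the proof genuinely reduces to the scalar argument above.
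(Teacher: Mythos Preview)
Your proposal is correct and follows essentially the paper's own route: reduce to a componentwise scalar recursion via $\funK{\X}{\X}\to\I$, then run an induction in which only the \emph{lower} bound on $\priorGPmeanVec_j(\X)$ is needed as hypothesis to get both bounds at step $j{+}1$. Your quantity $\frac{d_j-1}{d_j}$ is exactly the paper's $[\M(\priorGPmeanVec_j(\X))]^{ii}=\frac{\sigma_0^2+\gamma_j^i}{1+\sigma_0^2+\gamma_j^i}$, and the paper bounds it in $[0,1]$ (equivalently $d_j\geq 1$, which is stronger than your $d_j\geq\tfrac12$ but obtained the same way from the inductive lower bound).

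One correction on your side remark: the base-case upper bound at $j{=}0$ cannot be rescued by taking $\sigma_0^2\leq\min_i(\getElementOfVec{\Y}{i})^2$---the paper's Theorem~\ref{thm:contracting} assumes the \emph{opposite}, $\sigma_0^2\1\geq\Y^{\circ 2}$. The paper simply does not verify the upper bound at $j{=}0$; its induction (like yours) only needs the lower bound at $j$ to propagate, so the stated two-sided bound is really established for $j\geq 1$, which is your second reading.
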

\begin{proof}[Sketch]
    The proof detailed in the Appendix uses induction arguments in order to show that the iterations~$j$ of Algorithm~\ref{alg:alg} remain bounded.
\end{proof}

\begin{theorem}
\label{thm:contracting}
Let $\sigma_0^2$ be chosen such that $\sigma_0^2 \1\geq \Y^{\circ 2}$ and let $\priorGPmeanVec_\star(\X)$ denote the optimum of the prior-GP.
Then, using the assumption of a small length scale, we can show that Algorithm~\ref{alg:alg} converges to the optimum, $\priorGPmeanVec_{j}(\X) \rightarrow \priorGPmeanVec_\star(\X)$ for $j\rightarrow\infty$.
\end{theorem}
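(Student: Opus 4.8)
The plan is to reduce the iteration to a simple affine (contractive) fixed‑point map in the small–length‑scale regime and then invoke the Banach fixed‑point theorem. Under $l\to 0$ we have $\funK{\X}{\X}\to\I$, so the posterior mean iteration collapses to $\meanGPmeanVec_{j+1}(\X)=(1+\sigma_0^2)^{-1}(\I+\priorGPmeanDiag_j(\X)/(1+\sigma_0^2))^{-1}\Y$ acting elementwise — more precisely, writing $\priorGPmeanVec_j$ for the diagonal entries, the $i$-th component of the residual is $\bigl([\Y]^i-[\meanGPmeanVec_{j+1}(\X)]^i\bigr)=\bigl([\Y]^i\bigr)\cdot\dfrac{\sigma_0^2+[\priorGPmeanVec_j(\X)]^i}{1+\sigma_0^2+[\priorGPmeanVec_j(\X)]^i}$. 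Squaring, subtracting $\sigma_0^2$, and passing through the prior-GP (which in the limit is just multiplication by $(1+\sigma_0^2)^{-1}$) gives a coordinatewise update $[\priorGPmeanVec_{j+1}(\X)]^i=T\bigl([\priorGPmeanVec_j(\X)]^i\bigr)$ for an explicit scalar map $T$ depending only on $[\Y]^i$ and $\sigma_0^2$. So the whole problem decouples into $N$ identical one‑dimensional fixed‑point problems, and it suffices to prove $T$ is a contraction on the invariant interval supplied by Lemma~\ref{lma:bounded}.

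The key steps, in order, are: (i) write out $T$ explicitly, $T(p)=\dfrac{1}{1+\sigma_0^2}\Bigl(\,[\Y]^i{}^2\,\dfrac{(\sigma_0^2+p)^2}{(1+\sigma_0^2+p)^2}-\sigma_0^2\Bigr)$; (ii) confirm that the interval $I=\bigl[-\sigma_0^2/(1+\sigma_0^2),\ (1+\sigma_0^2)^{-1}([\Y]^i{}^2-\sigma_0^2)\bigr]$ from Lemma~\ref{lma:bounded} is mapped into itself by $T$ (this is essentially a restatement of that lemma, so I can cite it); (iii) compute $T'(p)=\dfrac{2\,[\Y]^i{}^2}{1+\sigma_0^2}\cdot\dfrac{\sigma_0^2+p}{(1+\sigma_0^2+p)^3}$ and bound $\sup_{p\in I}|T'(p)|<1$; (iv) invoke the mean value theorem to get $|T(p)-T(q)|\le L|p-q|$ with $L<1$, hence $T$ is a contraction on the complete metric space $I$; (v) apply the Banach fixed‑point theorem to conclude $[\priorGPmeanVec_j(\X)]^i\to$ the unique fixed point, which must be $[\priorGPmeanVec_\star(\X)]^i$ by Lemma~\ref{lma:joint_convergence}'s characterization; (vi) since the $N$ coordinates are independent, $\priorGPmeanVec_j(\X)\to\priorGPmeanVec_\star(\X)$ in, say, the $\ell_2$ norm.

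The hypothesis $\sigma_0^2\1\ge\Y^{\circ2}$ is what makes step (iii) work: it forces $[\Y]^i{}^2\le\sigma_0^2$, so on $I$ (where $\sigma_0^2+p\ge 0$ and $1+\sigma_0^2+p\ge 1$) we get $|T'(p)|\le \dfrac{2\sigma_0^2}{1+\sigma_0^2}\cdot\dfrac{\sigma_0^2+p}{(1+\sigma_0^2+p)^3}$, and on this interval $\sigma_0^2+p\le\sigma_0^2$ while the cube in the denominator is at least $1$, giving a crude bound $|T'(p)|\le \dfrac{2\sigma_0^4}{1+\sigma_0^2}$; a sharper analysis optimizing the ratio $(\sigma_0^2+p)/(1+\sigma_0^2+p)^3$ over $I$ yields the clean estimate $L<1$. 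This is also the step I expect to be the main obstacle: getting a bound on $|T'|$ that is genuinely $<1$ may require either the assumption $\sigma_0^2$ not too large or a more delicate estimate exploiting that $p\le(1+\sigma_0^2)^{-1}([\Y]^i{}^2-\sigma_0^2)\le 0$ under the theorem's hypothesis, which shifts the worst case toward $p=-\sigma_0^2/(1+\sigma_0^2)$ where $T'=0$ — so in fact the contraction constant should be controlled by evaluating $|T'|$ near the upper endpoint and checking monotonicity of $T'$ on $I$. The remaining steps are routine; the convergence to the \emph{optimum} specifically (rather than just some fixed point) is immediate once uniqueness of the fixed point on $I$ is established and combined with Lemma~\ref{lma:joint_convergence}.
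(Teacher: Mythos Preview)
Your reduction to a coordinatewise scalar fixed-point map $T$ is exactly the paper's starting point, and the map $T$ you write down matches theirs. From there the paper takes a secant-slope route: it computes $T(\gamma_j^i)-T(\gamma_\star^i)$ directly, factors out $(\gamma_j^i-\gamma_\star^i)$ via polynomial long division to obtain an explicit coefficient $a_j^i$, and then bounds $-1<a_j^i<1$ by inserting the worst-case endpoints from Lemma~\ref{lma:bounded} into numerator and denominator separately. Your derivative-plus-MVT/Banach version is equivalent in spirit and avoids the long division; once step~(iii) is carried out properly it also delivers a cleaner contraction constant than the paper's (which ends up with rational expressions in $\sigma_0^2$ whose bound by~$1$ is verified numerically).

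Two corrections to your step~(iii). First, your claim that $T'=0$ at the lower endpoint $p=-\sigma_0^2/(1+\sigma_0^2)$ is wrong: there $\sigma_0^2+p=\sigma_0^4/(1+\sigma_0^2)\neq 0$, so $T'$ does not vanish (and for $\sigma_0^2>1$ the maximum of $T'$ on $I$ actually sits at that lower endpoint, not the upper one, so your monotonicity heuristic is backwards). Second, your crude bound $|T'|\le 2\sigma_0^4/(1+\sigma_0^2)$ does blow up for large $\sigma_0^2$, but the sharper analysis you gesture at is short: substitute $s=1+\sigma_0^2+p$, so that the factor $(\sigma_0^2+p)/(1+\sigma_0^2+p)^3$ becomes $(s-1)/s^3$, whose global maximum over $s>0$ is $4/27$ at $s=3/2$. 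On $I$ one has $s\ge (1+\sigma_0^2+\sigma_0^4)/(1+\sigma_0^2)\ge 1$, hence
\[
|T'(p)|\;\le\;\frac{2\,(y^i)^2}{1+\sigma_0^2}\cdot\frac{4}{27}\;\le\;\frac{8\sigma_0^2}{27(1+\sigma_0^2)}\;<\;\frac{8}{27}
\]
uniformly on $I$, using the hypothesis $(y^i)^2\le\sigma_0^2$. This closes (iii), and steps (iv)--(vi) then go through as you outlined.
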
 
\begin{proof}[Sketch]
The proof detailed in the Appendix uses the bounds established in Lemma~\ref{lma:bounded} in order to show that the prior-GP is contracting with
\begin{align*}
    \priorGPmeanVec_{j+1}(\X) - \priorGPmeanVec_\star(\X) = \boldsymbol A_j (\priorGPmeanVec_j(\X) - \priorGPmeanVec_\star(\X)),
\end{align*}
i.e., the eigenvalues of $\boldsymbol A_j$ are inside the unit circle for every iteration~$j$ of Algorithm~\ref{alg:alg}.
\end{proof}

\begin{remark}
The design choice and condition for convergence $\sigma_0^2 \1\geq \Y^{\circ 2}$ makes sense intuitively, because it means that the prior $\sigma_0^2$ should be chosen to cover the square root of the data points~$\dataIndex{y}{\dataCounter}$. In other words, the prior should have at least the support of the posterior, akin to why proposal densities work for sequential Monte-Carlo methods \cite{Doucet2009}. 
\end{remark}

\section{Simulation Results}
\label{sec:results}
In this section, we validate our proposed method using two numerical examples. The first is a non-physical system meant to illustrate the efficacy of the proposed method. The second example shows how to leverage the state-dependent variance estimates for stochastic control.

\subsection{Illustrative Estimation Example}
In order to study the proposed method, we generate data according to the ground-truth function
\begin{subequations} 
\label{eq:ground_truth_results}
\begin{align}
    \dataIndex{y}{\dataCounter} &= 0.5(\cos{\dataIndex{x}{\dataCounter}} + 1) + \dataIndex{\epsilon}{\dataCounter}
    \\
    \dataIndex{\epsilon}{\dataCounter} &\sim \mathcal{N}\left(0, 0.19 \dataIndex{x}{\dataCounter} + 0.1\right),
\end{align}
\end{subequations}
where the data points $\dataIndex{x}{\dataCounter}$ are sampled from a uniform distribution $\dataIndex{x}{\dataCounter}\! \sim\! \mathcal{U}(0,10)$. 
Next, we study the number of data samples  required to accurately approximate both the mean and the noise distribution in~\eqref{eq:ground_truth_results}.

\subsubsection{Qualitative results}
Fig.~\ref{fig:increase_N} illustrates a comparison of a classical implementation of a GP and the proposed solution for different numbers of collected data points $N$.
First, consider $N\!=\!1$.
As the one collected data point is far from zero (around the $2\sigma_0$ bound), the proposed solution does not trust the data point as much as the classical solution. 
Hence, in such an instance, the proposed solution is more conservative.
Next, consider $N\!=\!5$.
Here, it can be seen that the ``closeness" of the four measurements between $x\in [0,5]$ gives the proposed Gaussian process regression algorithm confidence in its estimate. Consequently, the uncertainty is decreased more than for the classical GP.
For $N\!=\!20$, the proposed algorithm increases the uncertainty between $x\in [5,10]$ as measurements are more spread out. 
Here, it can be seen that the classical GP underestimates the noise for $x\in [5,10]$ and overestimates the noise for $x\in [0,5]$. In contrast, the proposed GP correctly predicts the trend of the uncertainty in the data of the unknown function.
For $N\!=\!100$, the proposed solution accurately estimates both the mean function and its noise distribution, whereas the classical GP solution can only approximate the mean.

\begin{figure}[t]
    \centering   
    \includegraphics[trim={0.8cm 2.2cm 1.5cm 1.2cm}, clip, width=1\columnwidth]{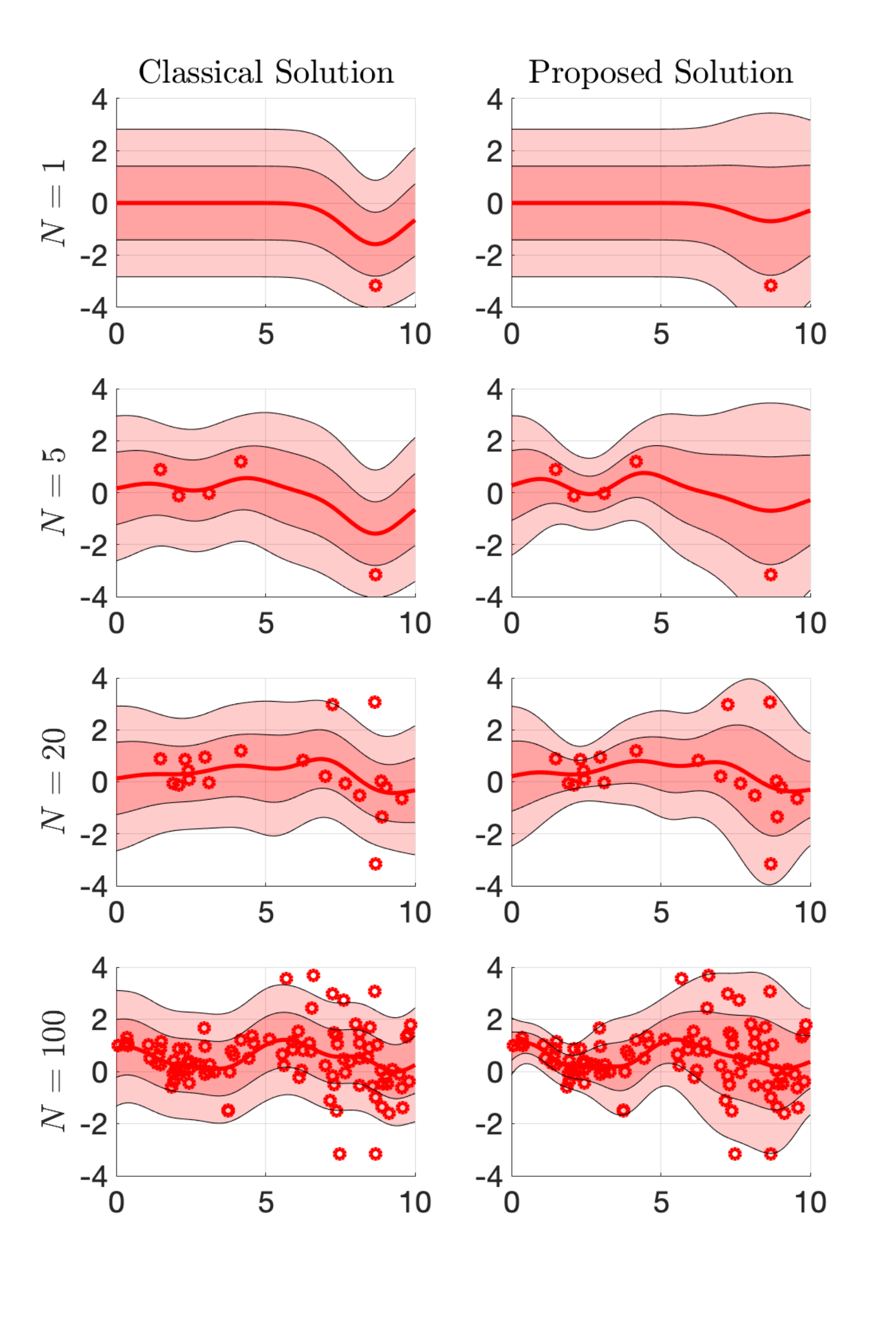}
    \caption{Comparison of classical GP regression and the proposed GP regression with state-dependent noise. 
    Left column: Classical GP regression.
    Right column: Proposed GP regression.
    The plot illustrates the mean in red as well as the $1\sigma$ bound and $2\sigma$  bound in different shades of red.
    The training data points are illustrated in and are the same for both classical GP regression and the proposed method.}
    \label{fig:increase_N}
\end{figure}

\subsubsection{Quantitative results}
Fig.~\ref{fig:conv} shows the convergence rate of the GP to the ground truth function, i.e., $\|\meanGPmean(x)- \meanGroundTruth(x)\|_2$ and $\|\priorGPmean(x)-\priorGroundTruth(x)\|_2$.
It shows that the proposed GP with state-dependent noise is able to learn both the mean and the variance of the ground-truth function.
First, the convergence rate of the mean estimate in the proposed method is faster than the standard GP implementation with a constant prior. 
The reason for the faster convergence rate is that the proposed GP implicitly learns that it can trust the measurements more if they are closer together, and also that it cannot trust the measurements as much if they are further apart.
As $N\rightarrow \infty$, the mean estimates of both the classical GP and the proposed GP formulations converge to the ground truth. 
However, the variance estimates of the constant-prior implementation will inaccurately reflect the true uncertainty, which can be detrimental when used in combination with a feedback controller.

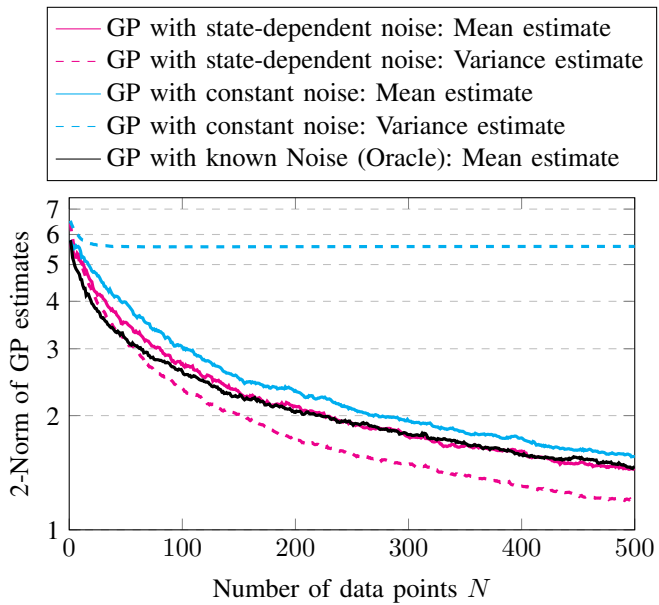
\begin{figure}[t]
\centering   
\begin{tikzpicture}[every plot/.append style={ very thick}]
\begin{axis}[
    ymode=log,
    xmin=0, xmax=500,
    ymin=1, ymax=7.5, 
   ylabel={2-Norm of GP estimates}, 
   ylabel near ticks,
   xlabel={Number of data points $N$},  
   ytick={1,2,3,4,5,6,7},  
   yticklabels={1,2,3,4,5,6,7},  
    ymajorgrids=true,
    legend style={at={(0.5,1.05)},anchor=south},  
    legend columns=1,
    grid style=dashed,
    width=9.1cm, 
    height=6cm,
    legend cell align={left}
	] 
]	  
\addplot[color=magenta] table[x index=0,y index=1] {data/convProposedGP.txt};
\addplot[color=magenta,dashed] table[x index=0,y index=1] {data/convProposedGPvar.txt};
\addplot[color=cyan] table[x index=0,y index=1] {data/convVanillaGP.txt};
\addplot[color=cyan,dashed] table[x index=0,y index=1] {data/convVanillaGPvar.txt};  
\addplot[color=black] table[x index=0,y index=1] {data/convOracleGP.txt};

\legend{GP with state-dependent noise: Mean estimate, 
GP with state-dependent noise: Variance estimate, 
GP with constant noise: Mean estimate,
GP with constant noise: Variance estimate, 
GP with known Noise (Oracle): Mean estimate};     
\end{axis}
\end{tikzpicture}  
\caption{Statistics of GP estimate accuracy. The plot illustrates the median of the 2-norm of 500 Monte-Carlo trials.
The plot shows how quickly the mean estimate can decrease if the noise distribution were known (in black), which provides a bound on the convergence rate.
Further, the plot shows two GP models: a GP with constant noise (in cyan) and the proposed GP with state-dependent noise (in magenta).
}
\label{fig:conv}
\end{figure}

\subsection{Application to Stochastic Control}
Next, we apply the proposed GP regression model to an illustrative stochastic control example.
We employ the solver in~\cite{menner2022learning}. 
However, any stochastic control method, see e.g., \cite{hewing2020cautious}, can be utilized as long the control formulation considers uncertainty bounds.  
We consider the dynamical system
\begin{align*}
\begin{bmatrix}
    p_{k+1}\\v_{k+1}
\end{bmatrix} = 
\begin{bmatrix}
1 & T_s
\\
0 & 1
\end{bmatrix} 
\begin{bmatrix}
    p_{k} \\v_{k}
\end{bmatrix}
-
\begin{bmatrix}
    0 \\ T_s F(v_k)
\end{bmatrix}
+
\begin{bmatrix}
    0 \\ T_s
\end{bmatrix}
u_k,
\end{align*}
with the a-priori unknown friction term $F(v_k)$ given by 
\begin{subequations} 
    \label{eq:distribution_results}
\begin{align}
    F(v)\sim \mathcal{N}(h(v), g(v)),\label{eq:FrictionTerm}
\end{align}
with
\begin{align}
    h(v) &= v + \sin(3v),
    \\
    g(v) &= 0.8 \max(0,v) + 0.2.
\end{align} 
\end{subequations} 
Hence, the dynamics model studied in this section exhibits an uncertain friction term \eqref{eq:FrictionTerm}, which we learn using the proposed algorithm.
The GP model is subsequently used within a stochastic controller as state constraints, which are enforced as chance constraints, 
\begin{align}
\label{eq:chance_constraints}
    &{\rm Pr}(v_k \in \mathrm{X})\geq p_x ,
\end{align} 
with $\mathrm{X}=[-1,1]$m/s and confidence interval $p_x=95.44\%$, i.e., $\pm 2\sigma$ for a normal distribution.
The results in the following use GP models learned using $N=100$ data points, which are randomly generated according to $\dataIndex{v}{\dataCounter}\sim \mathcal{U}(-1,1)$ for each Monte-Carlo trial.
The optimal controller optimizes the cost function $\sum_{k=0}^{N} (10 (p_k-p_{\rm ref})^2 + u_k^2)$, where $p_{\rm ref}=1$ for 0--2s and $p_{\rm ref}=0$ after 2s.

Fig.~\ref{fig:stochastic_control} shows statistics of 500 Monte-Carlo trials of the stochastic controller using three different implementations of GP models.
It shows the stochastic controller using the proposed GP model and two GPs with constant priors as a comparison, one that uses a prior to remain cautious and one that uses a prior to be aggressive.
The aggressive GP implementation uses a prior $\sigma_0^2 = 0.2$, which is the smallest noise in the data~\eqref{eq:distribution_results}.
The cautious GP implementation uses $\sigma_0^2 = 1$, which is the highest noise level inside the chance constraints in~\eqref{eq:chance_constraints}.
Fig.~\ref{fig:stochastic_control} illustrates the median of the position and the velocity, as well as the $2\sigma$-bound of the velocity spread.
It shows that the proposed GP implementation is cautious between~0--1s, which is needed to satisfy the chance constraints in~\eqref{eq:chance_constraints}.
During this time interval, this behavior resembles the cautious GP very closely (see middle plot).
Next, Fig.~\ref{fig:stochastic_control} shows that the proposed GP implementation is aggressive during the time interval~2--3s, which is caused by the low noise characteristics for negative velocities.
During this time interval, this behavior resembles the aggressive GP very closely (see bottom plot).
On the other hand, the aggressive GP is too aggressive during~0--1s and the cautious GP is not aggressive enough during~2--3s.
Overall, the proposed GP is cautious when its needed to satisfy~\eqref{eq:chance_constraints}, but is aggressive whenever possible.

Table~\ref{tb:cost} summarizes the closed-loop cost and percentages of constraint violations of the three GP implementations. 
As expected, the aggressive GP implementation violates the chance constraints in~\eqref{eq:chance_constraints} with an average 22\% of constraints violations during intervals~0--1s and~2--3s.
Next, the cautious GP implementation is overly conservative, which can be seen by its higher closed-loop operating cost.
The proposed GP implementation's constraint violations matches the $2\sigma$-bound according to the requirements in the chance constraints in~\eqref{eq:chance_constraints}.
Consequently, its operating cost are lower than the cautious GP implementation.

   \begin{figure}[t]
    \centering   
\begin{tabular}{lll}
\begin{tikzpicture}[every plot/.append style={very thick}]
\begin{axis}[
    xmin=0, xmax=4,
    ymin=-0.1, ymax=1.49, 
   ylabel={Position in [m]}, 
    legend style={at={(0.5,1)},anchor=north}, 
    ymajorgrids=true,
    legend columns=3,
    grid style=dashed,
    width=8.3cm, 
    height=4cm,
	] 
]	  
\addplot[color=mst2!20] table[x index=0,y index=1] {data/toy_cautious.txt}; 
\addplot[color=mst3!20,] table[x index=0,y index=1] {data/toy_aggressive.txt}; 
\addplot[color=black,] table[x index=0,y index=1] {data/toy_data_proposed.txt}; 
\addplot[color=mst2,dashed] table[x index=0,y index=1] {data/toy_cautious.txt}; 
\addplot[color=mst3,dashdotted] table[x index=0,y index=1] {data/toy_aggressive.txt}; 

\legend{,,Proposed, Cautious, Aggressive};     
 
\end{axis}
\end{tikzpicture}   
    \\
      \begin{tikzpicture}[every plot/.append style={ thick}]
\begin{axis}[
    xmin=0, xmax=4,
    ymin=-1.25, ymax=1.25, 
   ylabel={Velocity in [m/s]}, 
    legend style={at={(1,1)},anchor=north east}, 
    ymajorgrids=true,
    legend columns=1,
    transpose legend,
    grid style=dashed,
    width=8.3cm, 
    height=4cm,
	] 
]

\addplot[mst2,fill=mst2!50,opacity=.7] table[x index=0,y index=1] {data/toy_v_std_cautious.txt};
\addplot[black!95,fill=gray!50,opacity=.7] table[x index=0,y index=1] {data/toy_v_std_proposed.txt};
\addplot[color=mst2,dashed] table[x index=0,y index=1] {data/toy_v_cautious.txt}; 
\addplot[color=black,] table[x index=0,y index=1] {data/toy_v_proposed.txt}; 

\legend{,,Cautious,Proposed};    
    
\end{axis}
\end{tikzpicture} 
    \\
      \begin{tikzpicture}[every plot/.append style={ thick}]
\begin{axis}[
    xmin=0, xmax=4,
    ymin=-1.25, ymax=1.25, 
   ylabel={Velocity in [m/s]}, 
   xlabel={Time in [s]}, 
    legend style={at={(1,1)},anchor=north east}, 
    ymajorgrids=true,
    legend columns=1,
    transpose legend,
    grid style=dashed,
    width=8.3cm, 
    height=4cm,
	] 
]	 
\addplot[mst3,fill=mst3!50,opacity=.7] table[x index=0,y index=1] {data/toy_v_std_agg.txt};
\addplot[black!95,fill=gray!50,opacity=.7] table[x index=0,y index=1] {data/toy_v_std_proposed.txt};
\addplot[color=mst3,dashdotted] table[x index=0,y index=1] {data/toy_v_agg.txt}; 
\addplot[color=black,] table[x index=0,y index=1] {data/toy_v_proposed.txt}; 

\legend{,,Aggressive,Proposed};    
    
\end{axis}
\end{tikzpicture} 
\end{tabular} 
    \caption{Qualitative performance results of stochastic control using different GP models, i.e., three stochastic controllers.
    Top: Position achieved by the three different stochastic controllers.
    Middle: Velocity profile of stochastic controller using proposed GP with state-dependent noise inference (in black) and stochastic controller using GP with cautious choice of constant prior (in yellow).
    Bottom: Velocity profile of stochastic controller using proposed GP with state-dependent noise inference (in black) and stochastic controller using GP with aggressive choice of constant prior (in purple). 
    The figure shows the median of 500 Monte-Carlo runs as well as the $2\sigma$ bound.}
    \label{fig:stochastic_control}
\end{figure}
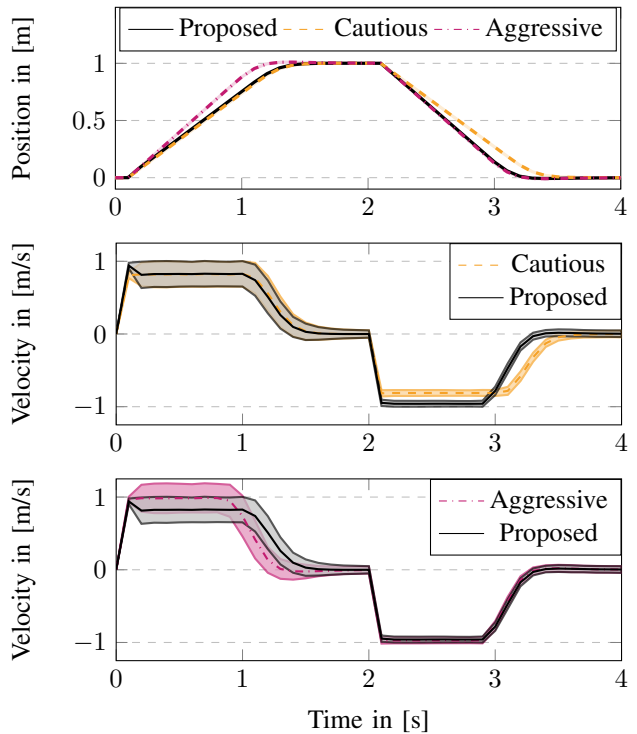

\renewcommand{\arraystretch}{0.825}
\begin{table}[t]
\caption{Cost and Constraint Violations}
\vspace{-0.5cm}
\label{tb:cost}
\begin{center}
\begin{tabular}{l|ccc|cccc}
\toprule %
 & & \hspace{-1cm} Closed-loop Cost \hspace{-1cm} & & Constraint Violations\\
Approach & Min. & Median & Max.  & during 0--1s and 2--3s\\
\midrule 
Proposed & 82.2 & 86.8 & 92.1 & 2\% \\
Aggressive & 77.1 & 81.3 & 86.6 & 22\% \\   
Cautious & 89.0 & 94.1 & 100.2 & 1\%\\   
\bottomrule
\end{tabular}
\end{center}
\end{table}

\section{Conclusions}
This paper considered a stochastic control formulation, in which the residual model uncertainty of the dynamical system consists of a nonlinear function and state-dependent noise.
It presented an iterative algorithm to learn the unknown nonlinear function along with the state-dependent noise distribution using GP regression.
The iterative algorithm used a posterior-GP to approximate the residual model uncertainty and a prior-GP to account for state-dependent noise. 
The iterative algorithm was shown to converge under simplifying assumptions. 
Simulation results showed that the proposed GP formulation enables faster convergence of the GP as the algorithm leverages the trustworthiness of the data by means of the prior,
Further, simulation results showed the advantages of using the proposed GP formulation within a stochastic control framework.

\addtolength{\textheight}{-7.5cm}  

\section*{Appendix: Proofs}

\subsubsection*{Proof of Lemma~\ref{lma:bounded}}
The update equations at iteration~$j$ used in Algorithm~\ref{alg:alg} are given by~\eqref{eq:iterations}.
Approximating~\eqref{eq:iterations} with a small length scale, $\funK{\X}{\X} \rightarrow \I$, yields
\begin{align*}
    &\meanGPmeanVec_{j+1}(\X)
    \approx \left(\left(1 \!+\!\sigma_0^2\right)\I + \priorGPmeanDiag_{j}(\X)\right)^{-1} \Y
\end{align*}
and
\begin{align*}
    &
    \priorGPmeanVec_{j+1}(\X)
    \approx \frac{1}{1+\sigma_0^2} 
    \left(\left(\Y - \meanGPmeanVec_{j+1}(\X) \right)^{\circ 2} - \sigma_0^2 \mathbf{1}\right)
    \\ 
    & 
    = \frac{1}{1+\sigma_0^2} 
    \left(\left(\Y-\left(\left(1+\sigma_0^2\right)\I + \priorGPmeanDiag_j(\X) \right)^{-1} \Y \right)^{\circ 2} - \sigma_0^2 \1\right)
    \\
    &
    = \frac{1}{1+\sigma_0^2} 
    \left(\left(\I-\left(\left(1+\sigma_0^2\right)\I + \priorGPmeanDiag_j(\X) \right)^{-1}\right)^2 \Y^{\circ 2} - \sigma_0^2 \1\right)
\end{align*}
Exploiting the diagonal structure of the matrix inverse, an analytical expression can be formulated,
\begin{align} 
\label{eq:analytical_z_new}
    \priorGPmeanVec_{j+1}(\X) 
    =
    \frac{1}{1+\sigma_0^2} 
    \left(\left( \M(\priorGPmeanVec_j(\X)) \right)^2 \Y^{\circ 2} - \sigma_0^2 \1\right)
\end{align}
with the diagonal matrix $\M(\priorGPmeanVec_j(\X))$, whose element on the $\elementCounterB$-th row and column is given by
\begin{align*}
    \getElementOfVec{\M(\priorGPmeanVec_j(\X))}{\elementCounterB\elementCounterB}
    =
    \frac{
    \sigma_0^2 \!+ \getElementOfVec{\priorGPmeanVec_j(\X)}{\elementCounterB}
    }{
    1 \!+\! \sigma_0^2 \!+\! \getElementOfVec{\priorGPmeanVec_j(\X)}{\elementCounterB}
    }.
\end{align*} 
 
We proceed using induction-based arguments.
Suppose $- \frac{\sigma_0^2}{1+\sigma_0^2}\leq 
\getElementOfVec{\priorGPmeanVec_j(\X)}{\elementCounterB}$, then, it is easy to verify that bounds can be established with
\begin{align}
\label{eq:bounds_on_M_new}
    0
    \leq 
    \getElementOfVec{\M(\priorGPmeanVec_j(\X))}{\elementCounterB\elementCounterB}
    =
    \frac{\sigma_0^2 \! + \getElementOfVec{\priorGPmeanVec_j(\X)}{\elementCounterB} 
    }{
    1 \!+\! \sigma_0^2 \!+\! \getElementOfVec{\priorGPmeanVec_j(\X)}{\elementCounterB}
    }
    \leq 1.
\end{align}
Using the lower bound and the upper bound in~\eqref{eq:bounds_on_M_new}, the expression in~\eqref{eq:analytical_z_new} can be bounded with
\begin{align*}
    - \frac{\sigma_0^2}{1+\sigma_0^2}\1 & \leq \priorGPmeanVec_{j+1}(\X) 
    \leq 
    \frac{1}{1+\sigma_0^2} (  \Y^{\circ 2} - \sigma_0^2 \1 ).
\end{align*}
Finally, the Algorithm uses the initialization $\priorGPmeanVec_{0}(\X)=\0$, which completes the induction.\hfill $\blacksquare$

\subsubsection*{Proof of Theorem~\ref{thm:contracting}}
For the converged optimizer of Algorithm~\ref{alg:alg},
\begin{align}
    \priorGPmeanVec_\star (\X)
    =
    \frac{1}{1+\sigma_0^2} 
    \left(\left( \M(\priorGPmeanVec_\star(\X)) \right)^2 Y^{\circ 2} - \sigma_0^2 \1\right)
\end{align}
with the diagonal matrix $\M(\priorGPmeanVec_\star(X))$ whose element on the $i$th row and column is given by
\begin{align*}
\getElementOfVec{\M (\priorGPmeanVec_\star(\X))}{\elementCounterB\elementCounterB}
    =
    \frac{
    \sigma_0^2 \!+  \getElementOfVec{\priorGPmeanVec_\star(\X)}{\elementCounterB}
    }{
    1 \!+\! \sigma_0^2 \!+\! \getElementOfVec{\priorGPmeanVec_\star(\X)}{\elementCounterB}
    }.
\end{align*} 

Next, consider the difference of the optimizer update to the converged optimizer, $\priorGPmeanVec_{j+1}(\X) - \priorGPmeanVec_\star(\X)$.
To ease exposition, let $\priorGPmean_{j}^\elementCounterB :=\getElementOfVec{\priorGPmeanVec_{j}(\X)}{\elementCounterB}$ and $\priorGPmean_\star^\elementCounterB := \getElementOfVec{\priorGPmeanVec_\star(\X)}{\elementCounterB}$.
For the $\elementCounterB$-th element, and using some algebraic reformulations
\begin{align*}
& 
\getElementOfVec{\priorGPmeanVec_{j+1}(X) - \priorGPmeanVec_\star(X)}{\elementCounterB} 
    = \priorGPmean_{j+1}^\elementCounterB - \priorGPmean^\elementCounterB_\star %
    \\
    &=
    \frac{1}{1\!+\!\sigma_0^2} 
    \left(
    \hspace{-0.1cm}
    \left(
    \frac{\sigma_0^2 \!+\! \priorGPmean_{j}^\elementCounterB}{1\!+\!\sigma_0^2 \!+\! \priorGPmean_{j}^\elementCounterB}
    \right)^2 
    \hspace{-0.1cm}
    (\dataIndex{y}{\elementCounterB})^2
     -  
    \left(
    \frac{\sigma_0^2 \!+\! \priorGPmean_\star^\elementCounterB}{1\!+\!\sigma_0^2 \!+\! \priorGPmean_\star^\elementCounterB}
    \right)^2 
    \hspace{-0.1cm}
    (\dataIndex{y}{\elementCounterB})^2
    \hspace{-0.1cm}
    \right) 
    \\
    &=
    \frac{(\dataIndex{y}{\elementCounterB})^2}{1\!+\!\sigma_0^2} 
    \frac{
    (\sigma_0^2 \!+\! \priorGPmean^\elementCounterB_j)^2 (1\!+\!\sigma_0^2 \!+\! \priorGPmean^\elementCounterB_\star)^2 
    \!-\! 
    (\sigma_0^2 
    \!+\! 
    \priorGPmean^\elementCounterB_\star)^2
    (1\!+\!\sigma_0^2 \!+\! \priorGPmean^\elementCounterB_j)^2
    }{
    (1\!+\!\sigma_0^2 \!+\! \priorGPmean^\elementCounterB_j)^2 (1\!+\!\sigma_0^2 \!+\! \priorGPmean^\elementCounterB_\star)^2}  
    \\
    &=
    \underbrace{
    \frac{(\dataIndex{y}{\elementCounterB})^2}{1\!+\!\sigma_0^2} 
    \frac{
    \priorGPmean^\elementCounterB_j \!+\! \priorGPmean^\elementCounterB_\star \!+\!2( \sigma_0^2 \!+\! \sigma_0^2 \priorGPmean^\elementCounterB_j \!+\! \sigma_0^2 \priorGPmean^\elementCounterB_\star \!+\! \priorGPmean^\elementCounterB_j \priorGPmean^\elementCounterB_\star \!+\! \sigma_0^4)
    }{
    (1\!+\!\sigma_0^2 \!+\! \priorGPmean^\elementCounterB_j)^2 (1\!+\!\sigma_0^2 \!+\! \priorGPmean^\elementCounterB_\star)^2
    }}_{=:a^\elementCounterB_j  } %
     (\priorGPmean^\elementCounterB_j \!-\! \priorGPmean^\elementCounterB_\star)
\end{align*} 
Note that the reformulation in the last line can be obtained by polynomial long division. %

The proof proceeds by establishing bounds for $a^\elementCounterB_j$.
Using the design choice $\sigma_0^2 \geq \max_\elementCounterB\ (\dataIndex{y}{\elementCounterB})^2$ and the bounds established in Lemma~\ref{lma:bounded},
\begin{align}
    \label{eq:bound_with_sigma}
    - \frac{\sigma_0^2}{1+\sigma_0^2}\1 & \leq \priorGPmeanVec_{j+1}(\X)   \leq \0.
\end{align} 

\textit{Upper bound on $a_j^\elementCounterB$:}
In order to derive the upper bound on $a_j^\elementCounterB$, we use~\eqref{eq:bound_with_sigma} to find the worst-case nominator with $\priorGPmean_j^\elementCounterB\leq 0$, $\priorGPmean_\star^\elementCounterB \leq 0$, $\priorGPmean_j^\elementCounterB \priorGPmean_\star^\elementCounterB \leq (\frac{\sigma_0^2}{1+\sigma_0^2})^2$, and $(\dataIndex{y}{\elementCounterB})^2 \leq \sigma_0^2$; and the worst-case denominator with $-\frac{\sigma_0^2}{1+\sigma_0^2}\leq \priorGPmean^\elementCounterB_j$ and $-\frac{\sigma_0^2}{1+\sigma_0^2}\leq \priorGPmean^\elementCounterB_\star$.
Hence, %
\begin{align*}
    a_j^\elementCounterB
    &\leq
    \frac{\sigma_0^2}{1+\sigma_0^2} 
    \frac{
    2 \sigma_0^2 + 2 \left(\frac{\sigma_0^2}{1+\sigma_0^2}\right)^2 + 2\sigma_0^4
    }{\left(1 +\sigma_0^2 - \frac{\sigma_0^2}{1+\sigma_0^2} \right)^2 \left(1+\sigma_0^2 - \frac{\sigma_0^2}{1+\sigma_0^2} \right)^2} 
    \\
    &=
    \underbrace{
    \frac{\sigma_0^2}{1+\sigma_0^2} 
    }_{< 1}
    \underbrace{
    \frac{
    2\sigma_0^2 (\sigma_0^2+1)^2 (\sigma_0^6 + 3\sigma_0^4 + 4\sigma_0^2 +  1)
    }{ \left(\sigma_0^4 + \sigma_0^2 + 1\right)^4}
    }_{< 1} <1.
\end{align*} 
Note that the second fraction being less than one is not easy to see, but can be easily established using any numerical software.

\textit{Lower bound on $a_j^\elementCounterB$:}
In order to derive the lower bound on $a_j^\elementCounterB$, we use~\eqref{eq:bound_with_sigma} with $\priorGPmean_j^\elementCounterB \geq -\frac{\sigma_0^2}{1+\sigma_0^2}$, $\priorGPmean^\elementCounterB_\star \geq -\frac{\sigma_0^2}{1+\sigma_0^2}$, and $\priorGPmean_j^\elementCounterB \priorGPmean^\elementCounterB_\star \geq - (\frac{\sigma_0^2}{1+\sigma_0^2})^2$ for the nominator; and $-\frac{\sigma_0^2}{1+\sigma_0^2}\leq \priorGPmean^\elementCounterB_j$ and $-\frac{\sigma_0^2}{1+\sigma_0^2}\leq \priorGPmean^\elementCounterB_\star$ for the denominator, i.e.,
\begin{align*} 
a_j^\elementCounterB
& \geq 
\frac{\sigma_0^2}{1+\sigma_0^2} 
    \frac{ 
    - 2\frac{\sigma_0^2}{1+\sigma_0^2} 
    \!+\!
    2\left(
    \sigma^2_0
    \!-\!
    2
    \sigma^2_0
    \frac{\sigma_0^2}{1+\sigma_0^2} 
    \!-\! \left(\frac{\sigma_0^2}{1+\sigma_0^2}\right)^2
    \!+\! \sigma_0^4
    \right)
    }{\left(1 +\sigma_0^2 - \frac{\sigma_0^2}{1+\sigma_0^2} \right)^2 \left(1+\sigma_0^2 - \frac{\sigma_0^2}{1+\sigma_0^2} \right)^2} 
    \\
    & = \underbrace{\frac{\sigma_0^2}{1+\sigma_0^2} }_{<1}
    \underbrace{
    \frac{2\sigma_0^4(\sigma_0^2+1)^2 (\sigma_0^4+\sigma_0^2 - 1)}{\left(\sigma_0^4 + \sigma_0^2 + 1\right)^4} }_{>-1}
    > -1.
\end{align*}
Hence, overall,
\begin{align*}
    \priorGPmeanVec_{j+1}(\X) - \priorGPmeanVec_\star(\X)  
    =
    \boldsymbol A_j  
     (\priorGPmeanVec_j(\X) - \priorGPmeanVec_\star(\X))
\end{align*}
with the diagonal matrix $\boldsymbol A_j$, whose diagonal elements $-1<a^\elementCounterB_j<1$, and thus, $\priorGPmeanVec_{j}(\X) \rightarrow \priorGPmeanVec_\star(\X)$ for $j\rightarrow\infty$. \hfill $\blacksquare$

\bibliography{bib.bib}
\bibliographystyle{ieeetr}

\end{document}